\documentclass[10pt,conference]{IEEEtran}
\IEEEoverridecommandlockouts
\usepackage[dvipdf]{graphicx,color}

\usepackage{amssymb}
\usepackage{amsmath}
\usepackage{stfloats}
\usepackage{amsfonts}
\usepackage{balance}
\usepackage{color}
\usepackage{algorithm}
\usepackage{algpseudocode}
\usepackage{color}
\usepackage{varwidth}
\usepackage{multicol}
\usepackage{subfigure}
\usepackage{xspace}
\usepackage{enumerate}
\usepackage{xcolor,cite,etoolbox}
\usepackage{bm}
\usepackage{amsthm}

\newtheorem{lemma}{Lemma}
\newtheorem{proposition}{Proposition}

\newtheorem{remark}{Remark}
\usepackage[font=footnotesize,labelfont=footnotesize]{caption}
\usepackage{ragged2e}

\def\BibTeX{{\rm B\kern-.05em{\sc i\kern-.025em b}\kern-.08em
    T\kern-.1667em\lower.7ex\hbox{E}\kern-.125emX}}

\begin{document}

\title{Performance Analysis of HAPS-Assisted Downlink RSMA under Orthogonal and Full Frequency Reuse}

\author{\IEEEauthorblockN{Farjam Karim\IEEEauthorrefmark{1},   Prathapasinghe Dharmawansa\IEEEauthorrefmark{1}, Nurul Huda Mahmood\IEEEauthorrefmark{1}, and Matti Latva-aho\IEEEauthorrefmark{1}%
	\thanks{This research was supported by Interreg Aurora ENSURE-6G Project. }}\\
\IEEEauthorblockA{\IEEEauthorrefmark{1}Centre for Wireless Communications, University of Oulu, Finland. \\  
		Email: \{farjam.karim,\;Prathapasinghe.KaluwaDevage,\;nurulhuda.mahmood,\;matti.latva-aho\}@oulu.fi 
		}} 

\maketitle

\begin{abstract}
This paper analyzes the outage performance of a high-altitude platform station (HAPS)-assisted downlink employing rate-splitting multiple access (RSMA). A realistic link budget accounting for free-space path loss, rain attenuation, and atmospheric absorption is combined with elevation-angle-dependent shadowed Rician fading. Closed-form outage probability and throughput expressions are derived for orthogonal frequency allocation and full frequency reuse, where the aggregate inter-beam interference is approximated by a moment-matched Gamma random variable, yielding a tractable finite-sum expression via its Laplace transform. The analytical expressions are validated through Monte Carlo simulations, showing close agreement. Numerical results show that the optimal common stream power allocation depends on the transmit power, while inter-beam interference under frequency reuse introduces an outage floor absent under orthogonal allocation, providing practical design insights for HAPS-RSMA systems.
\end{abstract}

\begin{IEEEkeywords}
High altitude platform station, Rate-splitting multiple access, shadowed-Rician fading, outage probability.
\end{IEEEkeywords}

\section{Introduction}
\label{sec:introduction}

Non-terrestrial networks (NTNs) are widely regarded as an integral part of the sixth-generation (6G) wireless vision, extending connectivity beyond the reach of terrestrial infrastructure and providing resilience where such infrastructure is unavailable or disrupted~\cite{Faical_mag_2025}. Among the platforms comprising the NTN ecosystem; geostationary and low-earth-orbit satellites, unmanned aerial vehicles, and high-altitude platform stations (HAPSs); HAPSs occupy a distinctive middle tier. Operating in the stratosphere at altitudes of $17$-$25$~kilometer (km), provide wide-area coverage with a predominantly line-of-sight (LoS) channel to ground users, combined with a substantially shorter propagation path, and hence lower latency, than either low-earth-orbit or geostationary satellites~\cite{kanani_2025_twc, kiric_ojcom_2025}.

A single HAPS beam can typically serve a large number of ground users over a limited spectrum allocation, making the choice of multiple access scheme central to how efficiently that beam is used. Under orthogonal multiple access (OMA), each user is allocated a disjoint share of the available resources, so the rate available to any one user falls directly as more users share the beam~\cite{Ghosh_comaprision_NOMA_OMA_Access_Jan23}. Non-orthogonal multiple access (NOMA) exploits this disparity by superposing users in the power domain, but requires every user to fully decode and cancel the signals of users with weaker channels, making its performance highly sensitive to the accuracy of this ordering~\cite{Bruno_proc}. Interestingly, rate-splitting multiple access (RSMA) is capable of relaxing this rigidity by splitting each user's message into a common message and a private message. The common messages of all users are jointly encoded into a single common stream, which is decoded by all users, while the private messages are individually encoded into separate private streams, each decoded only by its intended recipient~\cite{Bruno_proc, Farjam_twc_2025}. Consequently, RSMA allows interference to be partially decoded and partially treated as noise, thereby providing greater robustness and flexibility than power-domain NOMA~\cite{Bruno_proc, Farjam_twc_2025}.

Despite its advantages, the outage performance of downlink RSMA in HAPS-assisted multibeam systems remains largely unexplored (see, e.g., Section I of~\cite{huang_tvt_2025} and references therein).  HAPS deployments are expected to employ multiple beams to extend coverage and increase system capacity~\cite{Jved_twc_2025}. These beams may either operate on orthogonal frequency or time resources, thereby eliminating inter-beam interference~\cite{Jved_twc_2025}, or reuse the same resources to maximize spectral efficiency at the expense of additional interference. Although RSMA is inherently well suited to managing interference, the impact of these two operating regimes on the outage performance of HAPS systems has not been characterized analytically. Moreover, realistic HAPS links experience propagation conditions that differ from terrestrial networks, requiring the joint consideration of free-space path loss, atmospheric absorption, rain attenuation, and elevation-dependent shadowed-Rician fading. These observations motivate a unified analytical framework capable of characterizing the outage performance of HAPS-assisted RSMA under both interference-free and interference-limited multibeam operation.

The main contributions of this paper are threefold. First, we derive closed-form outage probability expressions for both the common and private streams of downlink HAPS-RSMA under orthogonal frequency allocation across beams, considering a realistic HAPS link budget and elevation-angle-dependent shadowed-Rician fading. Second, for the more challenging case of full frequency reuse, we develop a tractable analytical framework by approximating the aggregate inter-beam interference with a moment-matched Gamma random variable, leading to an approximated closed-form outage probability expressions. Finally, the developed analytical framework is validated through Monte Carlo simulations and used to quantify the impact of transmit power, RSMA power allocation, carrier frequency and multibeam frequency reuse on the outage and throughput performance of HAPS-RSMA. 

\noindent \textbf{Notations:}
$\mathbb{E}[\cdot]$ and $|\cdot|$ denote expectation and modulus operator, respectively. $\mathcal{CN}(0,N_0)$ denotes a circularly symmetric complex Gaussian distribution with zero mean and variance $N_0$. $f_X(\cdot)$, $F_X(\cdot)$ denote the PDF and CDF of $X$. ${}_1F_1(\cdot;\cdot;\cdot)$ is the confluent hypergeometric function, $\binom{\cdot}{\cdot}$ the binomial coefficient, and $(x)_p\triangleq x(x+1)\cdots(x+p-1)$ the rising factorial, with $(x)_0\triangleq1$. Superscript $(k)$ and subscript $n$ index the beam and user, respectively, and are omitted where unambiguous.
\section{System Model}
\label{sec:system_model}

\subsection{Network and Geometric Model}

We consider the downlink of a HAPS communication system, where a HAPS operating at a fixed altitude $H$ above the ground provides wireless connectivity to ground users as illustrated in Fig.~\ref{system_mod_fig}. To extend coverage, the HAPS employs multi-beam transmission and generates $\mathcal{K}\in\{1,2,\ldots,K\}$ beams. Within each beam, RSMA is adopted to simultaneously serve $\mathcal{N}\in\{1,2,\ldots,N\}$ users. Since all beams follow an identical transmission framework, the subsequent analysis is presented for a representative beam. Two scenarios are considered: (i) orthogonal frequency allocation, where inter-beam interference is absent, and (ii) full frequency reuse, where inter-beam interference is taken into account.

The position of $n_{th}$ user ($U_n$) relative to the HAPS $k_{th}$-beam is characterized by its elevation angle $\theta^{(k)}_n$, defined as the angle between the local horizontal plane at the $n_{th}$ user location and the line-of-sight direction toward the HAPS $k_{th}$-beam. 
Given the platform altitude $H$ and the elevation angle $\theta^{(k)}_n$ of user $U^{(k)}_n$, the slant propagation distance between the HAPS and $U^{(k)}_n$ is obtained from the right-triangle relationship between altitude, elevation angle, and slant range as
$d^{(k)}_n = \frac{H}{\sin\theta^{(k)}_n}$.
\begin{figure}[t!]
    \centering
\includegraphics[width=0.95\linewidth]{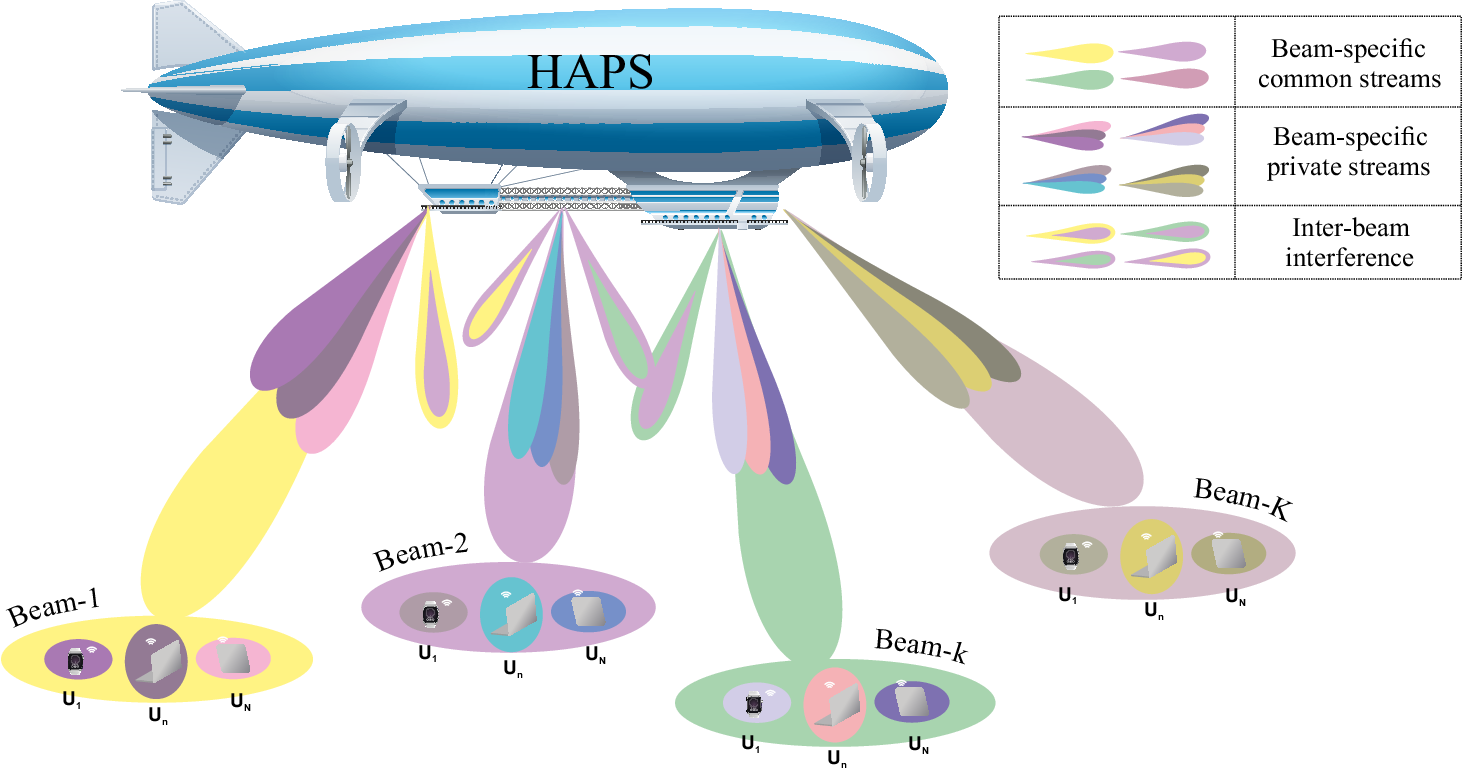}
    \caption{A schematic of the system model.}
    \label{system_mod_fig}
    \vspace{-1em}
\end{figure}
The large-scale channel gain between the HAPS and $U^{(k)}_n$ is characterized in the decibel (dB) domain as follows~\cite{Yahia_aerospace_2022}:
\begin{align}
F^{(k)}_n = G^{(k)}_T + G^{(k)}_{R,n} - L^{(k)}_{F,n} - L^{(k)}_{R,n} - L^{(k)}_A,
\label{eq:link_budget}
\end{align}
where, $G^{(k)}_T$ denotes the transmit antenna gain of the HAPS $k_{th}$-beam payload, and $G^{(k)}_{R,n}$ denotes the receive antenna gain of $U^{(k)}_n$, both expressed in dBi. The term $L^{(k)}_{F,n}$ represents the free-space path loss experienced by user $U^{(k)}_n$, and is given by
$L^{(k)}_{F,n} \ [\mathrm{dB}] = 92.45 + 20\log_{10}\!\big(f_r[\mathrm{GHz}]\big) + 20\log_{10}\!\big(d^{(k)}_n[\mathrm{km}]\big),$
where $f_r$ is the carrier frequency expressed in gigahertz (GHz) and $d^{(k)}_n$ is the slant propagation distance to user $U^{(k)}_n$, expressed in kilometers. The term $L^{(k)}_{R,n}$ accounts for attenuation due to rain along the propagation path, expressed in dB, and is modeled as the product of a rain attenuation rate (in dB/km) and the slant path length traversed. The term $L^{(k)}_A$ represents the fixed attenuation due to absorption by atmospheric gases (principally oxygen and water vapor), expressed in dB.

The quantity $F^{(k)}_n$ therefore represents the net large-scale link gain for user $U^{(k)}_n$, combining the beneficial effect of the antenna gains with the detrimental effects of free-space spreading loss, rain, and atmospheric absorption. Converting this quantity from the decibel domain to the linear scale yields the corresponding deterministic large-scale propagation gain
$L^{(k)}_n = 10^{F^{(k)}_n/10}$,
which will be used in the composite channel model below. 

\subsection{Small-Scale Fading: shadowed-Rician Model}

In addition to the deterministic large-scale propagation gain described above, the HAPS-to-ground link is subject to random small-scale fading. Owing to the dominant line-of-sight (LoS) component and its susceptibility to random shadowing, the small-scale fading of each link is modeled using the shadowed-Rician distribution~\cite{Yahia_aerospace_2022, yun_ai_photon_2019}.
Let $g^{(k)}_n$ denote the small-scale complex fading coefficient of the link to user $U^{(k)}_n$, and let
$Z^{(k)}_n \triangleq |g^{(k)}_n|^2$
denote the corresponding channel power gain. The PDF of $Z^{(k)}_n$ is given by~\cite{yun_ai_photon_2019}:
\begin{align}
f_{Z^{(k)}_n}(z)
=
\alpha^{(k)}_n
\exp(-\beta^{(k)}_n z)
\,{}_1F_1(m^{(k)}_n;1;\delta^{(k)}_n z), z\ge0,
\label{eq:sr_pdf}
\end{align}
where ${}_1F_1(\cdot;\cdot;\cdot)$ denotes the confluent hypergeometric function of the first kind. The parameter $m^{(k)}_n$ is a positive integer that characterizes the severity of shadowing affecting the LoS component. Specifically, smaller values of $m^{(k)}_n$ correspond to severe shadowing, whereas larger values indicate milder shadowing, approaching the conventional Rician fading model as $m^{(k)}_n\rightarrow\infty$. The parameters $\alpha^{(k)}_n$, $\beta^{(k)}_n$, and $\delta^{(k)}_n$ depend on the average LoS power $\Omega^{(k)}_n$ and the average scattered-component power $2b^{(k)}_n$, and are expressed as
$\alpha^{(k)}_n=\frac{1}{2b^{(k)}_n}\left(
\frac{2b^{(k)}_n m^{(k)}_n}{2b^{(k)}_n m^{(k)}_n+\Omega^{(k)}_n}
\right)^{m^{(k)}_n},
\beta^{(k)}_n=\frac{1}{2b^{(k)}_n},
\delta^{(k)}_n=\frac{\Omega^{(k)}_n}
{2b^{(k)}_n(2b^{(k)}_n m^{(k)}_n+\Omega^{(k)}_n)}$.

Since the shadowing parameter $m^{(k)}_n$ is assumed to be a positive integer, Kummer's transformation is first applied where the first argument becomes the non-positive integer. Consequently, the confluent hypergeometric function terminates into a finite polynomial, yielding a finite-series representation of the PDF expressed as
\begin{align}
f_{Z^{(k)}_n}(z)
=
\alpha^{(k)}_n
e^{-(\beta^{(k)}_n-\delta^{(k)}_n)z}
\sum_{i=0}^{m^{(k)}_n-1}
\binom{m^{(k)}_n-1}{i}\nonumber\\
\times
\frac{(\delta^{(k)}_n)^i}{i!}
z^i,
 z\ge0.
\label{eq:sr_pdf_series}
\end{align}

Integrating \eqref{eq:sr_pdf_series}, the CDF is obtained as
\begin{align}
F_{Z^{(k)}_n}(z)=1-&\alpha^{(k)}_n\sum_{i=0}^{m^{(k)}_n-1}\binom{m^{(k)}_n-1}{i}\frac{(\delta^{(k)}_n)^i}{i!}e^{-(\beta^{(k)}_n-\delta^{(k)}_n)z}\nonumber\\
&\times\sum_{l=0}^{i}\frac{i!}{l!}\frac{z^l}{(\beta^{(k)}_n-\delta^{(k)}_n)^{\,i-l+1}}, \qquad
 z\ge0.
\label{eq:sr_cdf}
\end{align} 
\subsection{Transmit Signal and SINR Characterization}

Within the  $k_{th}$-beam under consideration, the HAPS applies  RSMA to serve $\mathcal{N}$ users simultaneously. At the transmitter side each user's message is split by the rate-splitting encoder into two parts: a common part and a private part~\cite{Bruno_proc, Farjam_twc_2025}. The common parts of all users are combined and encoded into a single common stream $s^{(k)}_c$, drawn from a shared codebook that all the users within $k_{th}$-beam are capable of decoding, while each user's private part is encoded into an individual private stream, denoted $s^{(k)}_n$ for the $n_{th}$ user in the $k_{th}$-beam. These streams are then linearly superposed in the power domain to form the transmitted signal:
$x^{(k)} = \sqrt{P^{(k)} \zeta^{(k)}_c}\,s^{(k)}_c + \sum\limits^{N}_{n=1}\sqrt{P^{(k)} \zeta^{(k)}_n} s^{(k)}_n$,
where $\mathbb{E}[|s^{(k)}_c|^2]=\mathbb{E}[|s^{(k)}_n|^2]=1$, and $P^{(k)}$ denotes the power budget available for transmitting the $k_{th}$-beam, whereas $\zeta^{(k)}_c$ and $\zeta^{(k)}_n$ denotes the power allocation for common stream and the $U^{(k)}_n$ private stream, respectively. Note that $\zeta^{(k)}_c + \sum\limits^{N}_{n=1}\zeta^{(k)}_n=1$.
\subsubsection{Case I: Orthogonal Frequency Allocation}
In the first scenario, orthogonal frequency allocation is adopted among the HAPS beams. Therefore, each beam operates on an exclusive frequency resource, and inter-beam interference is completely avoided. The received signal at user $U_n^{(k)}$ ($k_{th}$ beam, $n_{th}$ user) is given by
\begin{align}
y^{(k)}_n = g^{(k)}_n \sqrt{L^{(k)}_n} x^{(k)}+  \eta^{(k)}_n
\label{eq:rx_signal}
\end{align}
where $\eta^{(k)}_n \sim \mathcal{CN}(0, N_0)$ is the additive white Gaussian noise (AWGN).

Decoding at each user proceeds in two stages, consistent with the RSMA decoding principle. In the first stage, user $U^{(k)}_n$ attempts to decode the common stream $s^{(k)}_c$, treating all the private streams as additional interference. From \eqref{eq:rx_signal}, the resulting signal-to-interference-plus-noise ratio (SINR) experienced by user $U^{(k)}_n$ when decoding the common stream is
\begin{equation}
\gamma^{(k)}_{c,n} = \frac{P^{(k)}\zeta^{(k)}_c|g^{(k)}_n|^2L^{(k)}_n}{P^{(k)}\left(1-\zeta^{(k)}_c\right)|g^{(k)}_n|^2L^{(k)}_n + N_0}.
\label{eq:sinr_common}
\end{equation}
 Because the common stream must be reliably decodable by every user sharing the beam before SIC can proceed, the maximum achievable rate for the common stream is limited by whichever user experiences the lowest of all the common stream SINRs.
 

Once user $U^{(k)}_n$ has successfully decoded the common stream, it reconstructs the corresponding signal component and subtracts it from the received signal $y^{(k)}_n$. Following this cancellation, $U^{(k)}_n$ proceeds to the second decoding stage, in which it decodes its own private stream $s^{(k)}_n$, while treating the other private streams $s^{(k)}_i$ (where $i \neq n$)  as  interference. The resulting SINR can be expressed as
\begin{align}
\gamma^{(k)}_{p,n} = \frac{P^{(k)}\zeta^{(k)}_n|g^{(k)}_n|^2L^{(k)}_n}{\sum\limits^{N}_{i=1, i\neq n}\zeta^{(k)}_iP^{(k)}|g^{(k)}_n|^2L^{(k)}_n + N_0}.
\label{eq:sinr_private}
\end{align}
\subsubsection{Case II: Full Frequency Reuse}
In the second scenario, all HAPS beams reuse the same frequency resource. Consequently, the received signal at $U_n^{(k)}$ consists of the desired signal from beam $k$, inter-beam interference from other beams, and AWGN. It is expressed as
\begin{align}
y_n^{(k)}=g_n^{(k)}\sqrt{L_n^{(k)}}x^{(k)}+\sum\limits_{\ell=1, \ell\neq k}^{K}g_n^{(\ell)}\sqrt{L_n^{(\ell)}}x^{(\ell)}+\eta_n^{(k)}.
\label{eq:rx_reuse}
\end{align}
where the second term represents the inter-beam interference. Note that, $L^{(\ell)}_n = 10^{F^{(l)}_n/10}$ such that $ F^{(\ell)}_n= G_T^{(k, \ell)} + G^{(k)}_{R,n} - L^{(\ell)}_{F,n} - L^{(\ell)}_{R,n} - L^{(\ell)}_A$. Here, $G_T^{(k, \ell)}$ is beam $\ell$'s transmit antenna gain toward $U_n^{(k)}$ actual location. Moreover, it is assumed that $g_n^{(\ell)}$ follows shadowed Rician distribution.
The common-stream SINR at user $U_n^{(k)}$ is therefore given by

\begin{equation}
\Upsilon_{c,n}^{(k)}=\frac{P^{(k)}\zeta_c^{(k)}|g_n^{(k)}|^2L_n^{(k)}
}{P^{(k)}(1-\zeta_c^{(k)})|g_n^{(k)}|^2L_n^{(k)}+I_{n}^{(k)}+N_0},
\label{eq:sinr_common_reuse}
\end{equation}
where
$I_n^{(k)}=\sum\limits_{\ell=1, \ell\neq k}^{K}P^{(\ell)}|g_n^{(\ell)}|^2
L_n^{(\ell)}$ denotes the inter-beam interference power.
After SIC of the common stream, the private-stream SINR can be expressed as

\begin{equation}
\Upsilon_{p,n}^{(k)}=\frac{P^{(k)}\zeta_n^{(k)}|g_n^{(k)}|^2L_n^{(k)}}{\sum\limits_{j\neq n}P^{(k)}\zeta_j^{(k)}|g_n^{(k)}|^2L_n^{(k)}+I_n^{(k)}+N_0}.
\label{eq:sinr_private_reuse}
\end{equation}

Equations \eqref{eq:sinr_common}, \eqref{eq:sinr_private}, 
\eqref{eq:sinr_common_reuse}, and \eqref{eq:sinr_private_reuse} 
fully characterize the common-stream and private-stream SINRs experienced by 
user $U_n^{(k)}$ for the two considered transmission scenarios. These SINR expressions serve as the basis for the performance analysis 
presented in the following section.

\section{Performance Analysis}\label{sec:analysis}
In this section, we derive the outage probability and throughput expressions of the considered HAPS-RSMA system.
The outage probability and throughput are evaluated separately for the two considered cases:
orthogonal frequency allocation and full frequency reuse. Let  $\gamma^{th, (k)}_{c,n}=2^{R^{(k)}_{c,n}}-1$ and $\gamma^{th, (k)}_{p,n}=2^{R^{(k)}_{p,n}}-1$ denote the target SINR thresholds for $U^{(k)}_n$ common and private stream, respectively with $R^{(k)}_{c,n}$ and $R^{(k)}_{c,n}$ representing the corresponding target rates. If \eqref{eq:sinr_common} and \eqref{eq:sinr_private} for \textit{Case I} and \eqref{eq:sinr_common_reuse} and \eqref{eq:sinr_private_reuse} for \textit{Case II} fails to cross the thresholds $\gamma^{th, (k)}_{c,n}$ and $\gamma^{th, (k)}_{p,n}$, respectively, then $U^{(k)}_n$ will be in an outage for \textit{Case I} and \textit{Case II}, respectively.

The respective outage probability for orthogonal frequency allocation (\textit{Case I}), is evaluated in the following proposition.
\begin{proposition}
    The outage probability for $U^{(k)}_n$ considering orthogonal frequency allocation can be expressed as
    \begin{align}\label{prop:outage_ortho}
P_{\mathrm{out},n}^{(k)} = \max\left(\mathcal{P}_{c,n}^{(k)},\ \mathcal{P}_{p,n}^{(k)}\right),
    \end{align}
 where    \begin{align}
 \mathcal{P}_{c,n}^{(k)}=
 1-
\sum_{i=0}^{m^{(k)}_n-1}\sum_{l=0}^{i}
\binom{m^{(k)}_n-1}{i}
\frac{\alpha^{(k)}_n(\delta^{(k)}_n)^i}{l!}\nonumber\\
\exp\left({\frac{-A_1\gamma_{c,n}^{th, (k)}}{A_2}}\right)\frac{\left(\frac{\gamma_{c,n}^{th, (k)}}{A_2}\right)^l}{A_1^{i-l+1}},
\nonumber  \end{align}
\begin{align}
 \mathcal{P}_{p,n}^{(k)}=
 1-
\sum_{i=0}^{m^{(k)}_n-1}\sum_{l=0}^{i}
\binom{m^{(k)}_n-1}{i}
\frac{\alpha^{(k)}_n(\delta^{(k)}_n)^i}{l!}\nonumber\\
\exp\left({\frac{-A_1\gamma_{p,n}^{th, (k)}}{A_3}}\right)\frac{\left(\frac{\gamma_{p,n}^{th, (k)}}{A_3}\right)^l}{A_1^{i-l+1}},
\nonumber  \end{align}
provided $A_2>0$ and $A_3>0$ respectively; otherwise $\mathcal{P}_{c,n}^{(k)}=1$ and $\mathcal{P}_{p,n}^{(k)}=1$, such that $\rho^{(k)}=P^{(k)}/N_{0}$,   $A_3=\rho^{(k)}L_n^{(k)}\left(\zeta_n^{(k)}-\gamma_{p,n}^{th, (k)}\sum^{N}_{i=1, i\neq n} \zeta^{(k)}_i\right)$,  $A_1=(\beta^{(k)}_n-\delta^{(k)}_n)$, and  $A_2=\rho^{(k)}L_n^{(k)}\left(\zeta_c^{(k)}-\gamma_{c,n}^{th, (k)}\left\{1-\zeta_c^{(k)}\right\}\right)$.
\end{proposition}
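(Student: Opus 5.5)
The plan is to reduce each SINR outage event to a threshold event on the single random variable $Z^{(k)}_n=|g^{(k)}_n|^2$. Then the shadowed-Rician CDF in \eqref{eq:sr_cdf} is evaluated at the resulting threshold.

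For the common stream, I would divide numerator and denominator of \eqref{eq:sinr_common} by $N_0$, writing $\rho^{(k)}=P^{(k)}/N_0$. The event $\gamma^{(k)}_{c,n}<\gamma^{th,(k)}_{c,n}$ then becomes
$\rho^{(k)}\zeta^{(k)}_c Z L^{(k)}_n < \gamma^{th,(k)}_{c,n}\bigl(\rho^{(k)}(1-\zeta^{(k)}_c)ZL^{(k)}_n+1\bigr)$.
Rearranging gives $A_2 Z<\gamma^{th,(k)}_{c,n}$, with $A_2$ exactly as defined in the statement.

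There are two cases. If $A_2\le 0$, the inequality holds for every $Z\ge 0$, so $\mathcal{P}^{(k)}_{c,n}=1$. Physically, the SINR is capped at $\zeta_c/(1-\zeta_c)$ regardless of the channel. If $A_2>0$, the outage probability is $F_{Z^{(k)}_n}(\gamma^{th,(k)}_{c,n}/A_2)$. Substituting into \eqref{eq:sr_cdf} with $A_1=\beta^{(k)}_n-\delta^{(k)}_n$ and cancelling the $i!$ factors yields the stated double sum. The private stream is handled identically from \eqref{eq:sinr_private}: the interference coefficient $1-\zeta_c$ is replaced by $\sum_{i\neq n}\zeta^{(k)}_i$, which gives $A_3$ and threshold $\gamma^{th,(k)}_{p,n}/A_3$.

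Next I would combine the two streams. $U^{(k)}_n$ is in outage if it fails the common stream or the private stream, which is the union of the events $\{Z<\gamma^{th}_{c}/A_2\}$ and $\{Z<\gamma^{th}_{p}/A_3\}$. Both are lower-threshold events on the same scalar $Z^{(k)}_n$, so they are nested. Their union is therefore the event with the larger threshold, and by monotonicity of the CDF its probability is $\max(\mathcal{P}_{c,n},\mathcal{P}_{p,n})$. The degenerate cases are consistent with this: if either $A_2\le 0$ or $A_3\le 0$, that term equals $1$ and so does the maximum.

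The only point needing care is this nesting argument. It relies on both SINRs being monotone increasing in the same single variable $Z$, which holds here because there is no inter-beam interference in Case I. One should also remember that the private-stream SINR is conditioned on successful SIC, but the union-of-events formulation handles that automatically.

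I would also double-check the CDF algebra: after substitution, the prefactor should be $\frac{1}{l!}\,\frac{(\gamma^{th}/A_2)^l}{A_1^{\,i-l+1}}$, with the $i!$ from the binomial-series term cancelling the $i!$ in the inner sum.
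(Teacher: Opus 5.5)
Your proposal is correct and follows essentially the same route as the paper's Appendix~A. Both rearrange each SINR condition into a threshold on the single variable $Z_n^{(k)}$ (outage with probability one when $A_2\le 0$ or $A_3\le 0$), note that the two events are nested so the larger threshold governs, and use monotonicity of the CDF \eqref{eq:sr_cdf} to obtain $\max(\mathcal{P}_{c,n}^{(k)},\mathcal{P}_{p,n}^{(k)})$; the only cosmetic difference is that you phrase it as a union of outage events, the paper as an intersection of success events, and your explicit check of the $i!$ cancellation is a useful addition.
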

\begin{proof}
    Please refer to Appendix~A.
\end{proof}
\begin{remark}
    The throughput for $U^{(k)}_n$ considering orthogonal frequency allocation can be expressed as
    \begin{align}
        T^{(k)}_{n}\label{throu_1}
= \left(1-P_{\mathrm{out},n}^{(k)} \right)\left({R^{(k)}_{c,n}}+{R^{(k)}_{p,n}}\right).   \end{align}
\end{remark}
Next, we focus on deriving the expression for \textit{Case II} (full frequency reuse). As stated earlier the common-stream SINR of user $U_n^{(k)}$ is given by \eqref{eq:sinr_common_reuse}, where the aggregate inter-beam interference power is
$I_n^{(k)} = \sum_{\ell=1,\ell\neq k}^{K} P^{(\ell)}|g_n^{(\ell)}|^2L_n^{(\ell)}$,
and each $g_n^{(\ell)}$ is independently shadowed-Rician distributed with parameters $\left(m_n^{(\ell)},b_n^{(\ell)},\Omega_n^{(\ell)}\right)$. Because $I_n^{(k)}$ is itself random, the resulting decoding threshold on $Z_n^{(k)}=|g_n^{(k)}|^2$ is no longer fixed, and the outage probability must be obtained by averaging over the distribution of $I_n^{(k)}$. We derive this in closed form by first approximating $I_n^{(k)}$ as a Gamma random variable via moment matching, and then exploiting the Laplace transform of the Gamma distribution to complete the averaging in closed form.

\subsection{Moments of a Single Interferer}
\begin{lemma}
\label{lemma:moments}
For a shadowed-Rician power gain $Z_n^{(\ell)}=|g_n^{(\ell)}|^2$ with parameters $\left(m_n^{(\ell)},b_n^{(\ell)},\Omega_n^{(\ell)}\right)$,
\begin{align}
\mathbb{E}\!\left[Z_n^{(\ell)}\right] &= \Omega_n^{(\ell)}+2b_n^{(\ell)}, \nonumber\\
\mathrm{Var}\!\left[Z_n^{(\ell)}\right] &= 4\left(b_n^{(\ell)}\right)^2+4b_n^{(\ell)}\Omega_n^{(\ell)}+\frac{\left(\Omega_n^{(\ell)}\right)^2}{m_n^{(\ell)}}.
\label{eq:moments_single}
\end{align}
\end{lemma}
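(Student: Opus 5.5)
The most transparent route is through the physical construction of the shadowed-Rician model rather than through the PDF \eqref{eq:sr_pdf}. I would write $g_n^{(\ell)} = A e^{j\phi} + S$. Here $S\sim\mathcal{CN}(0,2b_n^{(\ell)})$ is the scattered component. The LoS amplitude $A$ is Nakagami-$m$ with $\mathbb{E}[A^2]=\Omega_n^{(\ell)}$ and shape $m_n^{(\ell)}$. The phase $\phi$ is uniform, and $A$, $\phi$, $S$ are mutually independent. This is exactly the model whose power PDF is \eqref{eq:sr_pdf} with the stated $\alpha_n^{(\ell)},\beta_n^{(\ell)},\delta_n^{(\ell)}$.

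Given $A$, the power $Z=|g|^2$ is a noncentral $\chi^2$ (Rician power) variable with LoS power $A^2$ and diffuse power $2b$ (dropping indices). Its conditional moments are the standard ones:
\begin{align}
\mathbb{E}[Z\,|\,A]&=A^2+2b,\nonumber\\
\mathrm{Var}[Z\,|\,A]&=4b^2+4bA^2.\nonumber
\end{align}
Both follow from $|x+S|^2$ with $\mathbb{E}|S|^2=2b$, $\mathbb{E}|S|^4=2(2b)^2$, and vanishing odd cross terms.

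I would then apply the laws of total expectation and total variance. With $W=A^2\sim\mathrm{Gamma}(m,\Omega/m)$, so that $\mathbb{E}[W]=\Omega$ and $\mathrm{Var}[W]=\Omega^2/m$, this gives
\begin{align}
\mathbb{E}[Z]&=\Omega+2b,\nonumber\\
\mathrm{Var}[Z]&=\mathbb{E}[4b^2+4bW]+\mathrm{Var}[W+2b]\nonumber\\
&=4b^2+4b\Omega+\Omega^2/m,\nonumber
\end{align}
which is the claim \eqref{eq:moments_single}.

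If one prefers to work purely from the paper's stated PDF, there is an alternative. I would compute the moment generating function $\mathbb{E}[e^{-sZ}]$ from \eqref{eq:sr_pdf} using the Laplace transform $\int_0^\infty e^{-(\beta+s)z}{}_1F_1(m;1;\delta z)\,dz=(\beta+s)^{-1}\bigl(1-\delta/(\beta+s)\bigr)^{-m}$. This yields
\begin{align}
M(s)=\frac{\alpha(\beta+s)^{m-1}}{(\beta+s-\delta)^m}.\nonumber
\end{align}
Differentiating $\log M$ twice at $s=0$ gives the cumulants directly:
\begin{align}
\mathbb{E}[Z]&=-\frac{m-1}{\beta}+\frac{m}{\beta-\delta},\nonumber\\
\mathrm{Var}[Z]&=-\frac{m-1}{\beta^2}+\frac{m}{(\beta-\delta)^2}.\nonumber
\end{align}

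The main obstacle on this second route is the algebraic simplification. Substituting $\beta=1/(2b)$ gives $\beta-\delta=m/(2bm+\Omega)$, and therefore $1/(\beta-\delta)=(2bm+\Omega)/m$. Hence $\mathbb{E}[Z]=-(m-1)2b+2bm+\Omega=\Omega+2b$. Likewise $\mathrm{Var}[Z]=-(m-1)4b^2+(2bm+\Omega)^2/m=4b^2+4b\Omega+\Omega^2/m$.

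I would present the conditioning argument as the main proof and use the MGF computation as a consistency check with \eqref{eq:sr_pdf}. The check requires confirming that $\alpha=\beta(\beta-\delta)^m/\beta^m$, which guarantees $M(0)=1$.
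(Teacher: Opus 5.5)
Your proof is correct, and your main argument is essentially the paper's: condition on the Gamma-distributed LoS power, use the noncentral-$\chi^2$ conditional moments $\xi+2b$ and $4b^2+4b\xi$, and apply the laws of total expectation and total variance with $\mathrm{Var}[\xi]=\Omega^2/m$. Your additional MGF computation from \eqref{eq:sr_pdf} also checks out, including $\beta-\delta=m/(2bm+\Omega)$ and the normalization of $\alpha$, and it is a useful consistency check against the stated PDF that the paper does not include.
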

\begin{proof}
    Please refer to Appendix~B.
\end{proof}
\subsection{Gamma Approximation of the Aggregate Interference}
Since $I_n^{(k)}$ is a sum of independent, scaled versions of $Z_n^{(\ell)}$, its mean and variance follow directly from Lemma~\ref{lemma:moments}:
$\kappa_I^{(k)} = \frac{\left(\mu_I^{(k)}\right)^2}{\left(\sigma_I^{(k)}\right)^2}, \left(\sigma_I^{(k)}\right)^2 = \sum_{\ell\neq k}\left(P^{(\ell)}L_n^{(\ell)}\right)^{\!2}
\left(4\left(b_n^{(\ell)}\right)^2+4b_n^{(\ell)}\Omega_n^{(\ell)}+\frac{\left(\Omega_n^{(\ell)}\right)^2}{m_n^{(\ell)}}\right)
\Theta_I^{(k)} = \frac{\left(\sigma_I^{(k)}\right)^2}{\mu_I^{(k)}}, 
\mu_I^{(k)} = \sum_{\ell\neq k}P^{(\ell)}L_n^{(\ell)}\left(\Omega_n^{(\ell)}+2b_n^{(\ell)}\right)$.
We approximate $I_n^{(k)}$ as Gamma distributed, $I_n^{(k)}\sim\mathrm{Gamma}\!\left(\kappa_I^{(k)},\Theta_I^{(k)}\right)$, with shape and scale parameters obtained by matching the first two moments:
\begin{equation}
\kappa_I^{(k)} = \frac{\left(\mu_I^{(k)}\right)^2}{\left(\sigma_I^{(k)}\right)^2},\;\; \Theta_I^{(k)} = \frac{\left(\sigma_I^{(k)}\right)^2}{\mu_I^{(k)}}.
\label{eq:gamma_params}
\end{equation}
The respective outage probability for full frequency reuse (\textit{Case II}), is evaluated in the following proposition.
\begin{proposition}
    The outage probability for $U^{(k)}_n$ considering full frequency reuse can be evaluated as
    \begin{align}\label{outage_full_freq}
\hat{P}_{\mathrm{out},n}^{(k)} =\max\left(\hat{\mathcal{P}}_{c,n}^{(k)},\ \hat{\mathcal{P}}_{p,n}^{(k)}\right),
    \end{align}
 where \begin{align}
\hat{\mathcal{P}}_{c,n}^{(k)} \approx
1-\sum_{i=0}^{m_n^{(k)}-1}\sum_{l=0}^{i}\sum_{\tilde{p}=0}^{l}\binom{m_n^{(k)}-1}{i}\binom{l}{\tilde{p}}\left(\delta_n^{(k)}\right)^{i}\alpha_n^{(k)}\nonumber\\
\times\exp\left({-V_1 N_0}\right)
\frac{t_c^{\,l}}{l!\left(A_1\right)^{i-l+1}}
 (N_0)^{l-\tilde{p}}\!\left(\Theta_I^{(k)}\right)^{\tilde{p}}\!\left(\kappa_I^{(k)}\right)_{\tilde{p}}\nonumber\\
\times\left(1+\Theta_I^{(k)}V_1\right)^{-\left(\kappa_I^{(k)}+\tilde{p}\right)}\nonumber.
\end{align}
\begin{align}
\hat{\mathcal{P}}_{p,n}^{(k)} \approx
1-\sum_{i=0}^{m_n^{(k)}-1}\sum_{l=0}^{i}\sum_{\tilde{p}=0}^{l}\binom{m_n^{(k)}-1}{i}\binom{l}{\tilde{p}}\left(\delta_n^{(k)}\right)^{i} \alpha_n^{(k)}\nonumber\\
\times \exp\left(-V_2 N_0\right)
\frac{t_p^{\,l}}{l!\left(A_1\right)^{i-l+1}} (N_0)^{l-\tilde{p}}\!\left(\Theta_I^{(k)}\right)^{\tilde{p}}\!\left(\kappa_I^{(k)}\right)_{\tilde{p}}\nonumber\\
\times\left(1+\Theta_I^{(k)}V_2\right)^{-\left(\kappa_I^{(k)}+\tilde{p}\right)}\nonumber.
\end{align}
where $\tilde{A}_2=P^{(k)}L_n^{(k)}\left(\zeta_c^{(k)}-\gamma_{c,n}^{th, (k)}\left\{1-\zeta_c^{(k)}\right\}\right)$, $\tilde{A}_3=P^{(k)}L_n^{(k)}\left(\zeta_n^{(k)}-\gamma_{p,n}^{th, (k)}\sum^{N}_{i=1, i\neq n} \zeta^{(k)}_i\right)$ $V_1=A_1 t_c$, $V_2=A_1 t_p$, $t_c=\frac{\gamma_{c,n}^{th,(k)}}{\tilde{A}_2}$, $t_p=\frac{\gamma_{p,n}^{th,(k)}}{\tilde{A}_3}$,   and $\left(\kappa_I^{(k)}\right)_{\tilde{p}}\triangleq \kappa_I^{(k)}\left(\kappa_I^{(k)}+1\right)\cdots\left(\kappa_I^{(k)}+{\tilde{p}}-1\right)$ is the rising factorial. Both expressions hold provided $\tilde{A}_2>0$ and $\tilde{A}_3>0$, respectively; otherwise the outage probability is $1$.
\end{proposition}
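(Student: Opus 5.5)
We will follow the same route as for Proposition~1: reduce each stream's outage event to a threshold event on $Z_n^{(k)}=|g_n^{(k)}|^2$, then average the shadowed-Rician CDF over the Gamma-approximated interference $I_n^{(k)}$.

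\textbf{Common stream.} Starting from \eqref{eq:sinr_common_reuse}, the event $\Upsilon_{c,n}^{(k)}<\gamma_{c,n}^{th,(k)}$ rearranges to
\begin{align}
Z_n^{(k)}\,P^{(k)}L_n^{(k)}\big(\zeta_c^{(k)}-\gamma_{c,n}^{th,(k)}(1-\zeta_c^{(k)})\big)<\gamma_{c,n}^{th,(k)}\big(I_n^{(k)}+N_0\big).
\nonumber
\end{align}
The left-hand coefficient is exactly $\tilde{A}_2$. If $\tilde{A}_2\le 0$, the event holds surely and the outage probability is $1$. Otherwise, conditioned on $I_n^{(k)}=x$, the outage is $F_{Z_n^{(k)}}\big(t_c(x+N_0)\big)$ with $t_c=\gamma_{c,n}^{th,(k)}/\tilde{A}_2$. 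Substituting the finite-series CDF \eqref{eq:sr_cdf} gives
\begin{align}
1-\alpha_n^{(k)}\sum_{i}\sum_{l}\binom{m_n^{(k)}-1}{i}\frac{(\delta_n^{(k)})^i}{l!\,A_1^{i-l+1}}\,t_c^l\,(x+N_0)^l e^{-A_1t_c(x+N_0)}.
\nonumber
\end{align}

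Next, we expand $(x+N_0)^l=\sum_{\tilde p}\binom{l}{\tilde p}N_0^{l-\tilde p}x^{\tilde p}$ and factor out $e^{-V_1N_0}$ with $V_1=A_1t_c$. Averaging over $I_n^{(k)}\sim\mathrm{Gamma}(\kappa_I^{(k)},\Theta_I^{(k)})$, justified by Lemma~\ref{lemma:moments} and \eqref{eq:gamma_params}, reduces to the integrals $\mathbb{E}[x^{\tilde p}e^{-V_1x}]$. These are $(-1)^{\tilde p}$ times the $\tilde p$-th derivative of the Gamma Laplace transform $(1+\Theta_I^{(k)} s)^{-\kappa_I^{(k)}}$, evaluated at $s=V_1$, which gives
\begin{align}
\big(\kappa_I^{(k)}\big)_{\tilde p}\big(\Theta_I^{(k)}\big)^{\tilde p}\big(1+\Theta_I^{(k)}V_1\big)^{-(\kappa_I^{(k)}+\tilde p)}.
\nonumber
\end{align}
The same result follows from direct integration against the Gamma PDF. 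Collecting terms yields $\hat{\mathcal{P}}_{c,n}^{(k)}$, and the ``$\approx$'' reflects only the Gamma approximation of $I_n^{(k)}$.

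\textbf{Private stream.} We repeat the argument verbatim using \eqref{eq:sinr_private_reuse}. The threshold coefficient becomes $\tilde{A}_3$, and we set $t_p=\gamma_{p,n}^{th,(k)}/\tilde A_3$ and $V_2=A_1t_p$.

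\textbf{Overall outage.} Both events are monotone threshold events on $Z_n^{(k)}$ driven by the same $I_n^{(k)}$, so their union has the probability of the larger one. Hence the outage is $\max(\hat{\mathcal P}_{c,n}^{(k)},\hat{\mathcal P}_{p,n}^{(k)})$, exactly as in Proposition~1.

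\textbf{Main obstacle.} The delicate step is the max argument. Because $I_n^{(k)}$ is random, the union bound is exact only if the condition $t_c\ge t_p$ (or the reverse) holds regardless of $I_n^{(k)}$. Here the thresholds scale the same affine function $x+N_0$, so one event always contains the other; I would verify this explicitly. The remaining concern is bookkeeping of the triple sum, which is routine.
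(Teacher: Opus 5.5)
Your proposal is correct and follows essentially the same route as the paper's Appendix~C: rearrange to $Z_n^{(k)}\ge t\,(I_n^{(k)}+N_0)$, substitute the finite-series CDF, expand binomially, and evaluate $\mathbb{E}\big[(I_n^{(k)})^{\tilde p}e^{-sI_n^{(k)}}\big]$ by differentiating the Gamma Laplace transform. You also verify that the $\max$ stays exact under random interference, because both thresholds are deterministic multiples of the same positive quantity $I_n^{(k)}+N_0$; the paper leaves this step implicit by analogy with Appendix~A.
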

\begin{proof}
    Please refer to Appendix~C.
\end{proof}
\begin{remark}
The throughput of user $U_n^{(k)}$ under full frequency reuse can be obtained by replacing ${P}_{\mathrm{out},n}^{(k)}$ with $\hat{P}_{\mathrm{out},n}^{(k)}$ in \eqref{throu_1}.
\end{remark}

\section{Numerical Results}
\begin{figure*}[t]
    \centering
   \begin{minipage}[b]{0.32\textwidth}
        \centering
       \includegraphics[width=\textwidth]{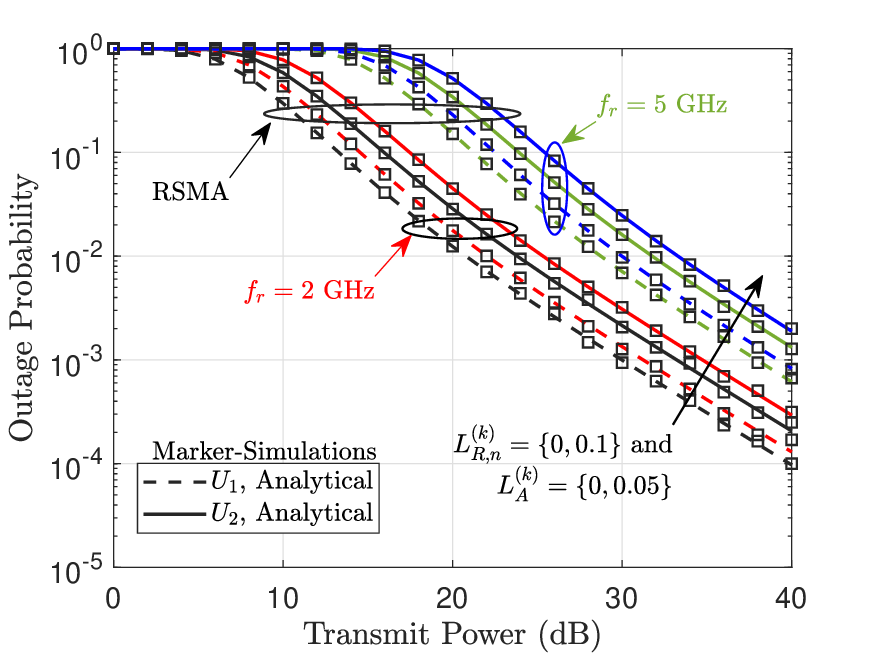}
       \caption{\textit{Case I}: Outage probability.}
        \label{fig1}
    \end{minipage}
     \begin{minipage}[b]{0.32\textwidth}
        \centering
       \includegraphics[width=\textwidth]{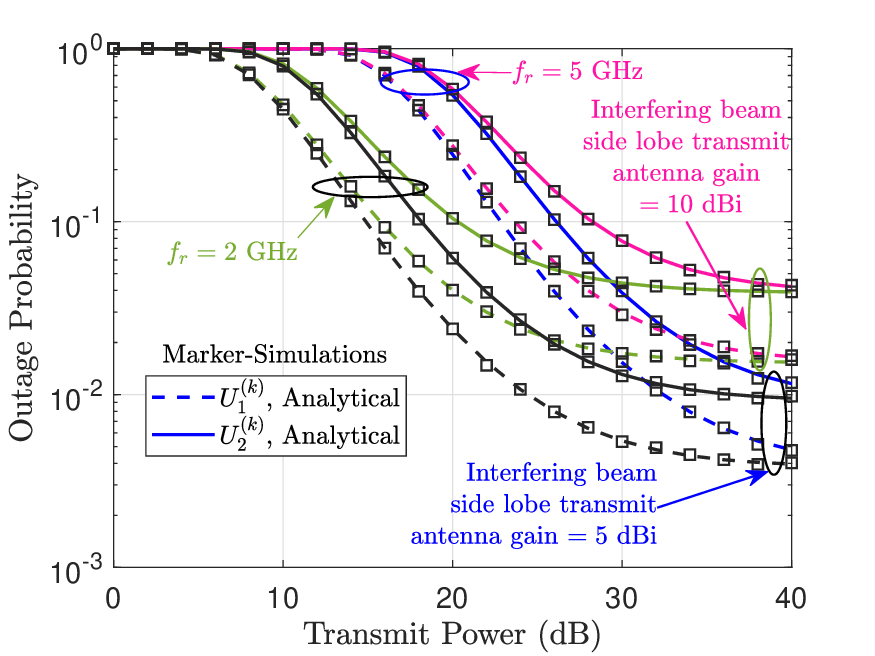}
        \caption{\textit{Case II}: Outage probability.}
        \label{fig2}
    \end{minipage}
    \begin{minipage}[b]{0.32\textwidth}
        \centering
        \includegraphics[width=\textwidth]{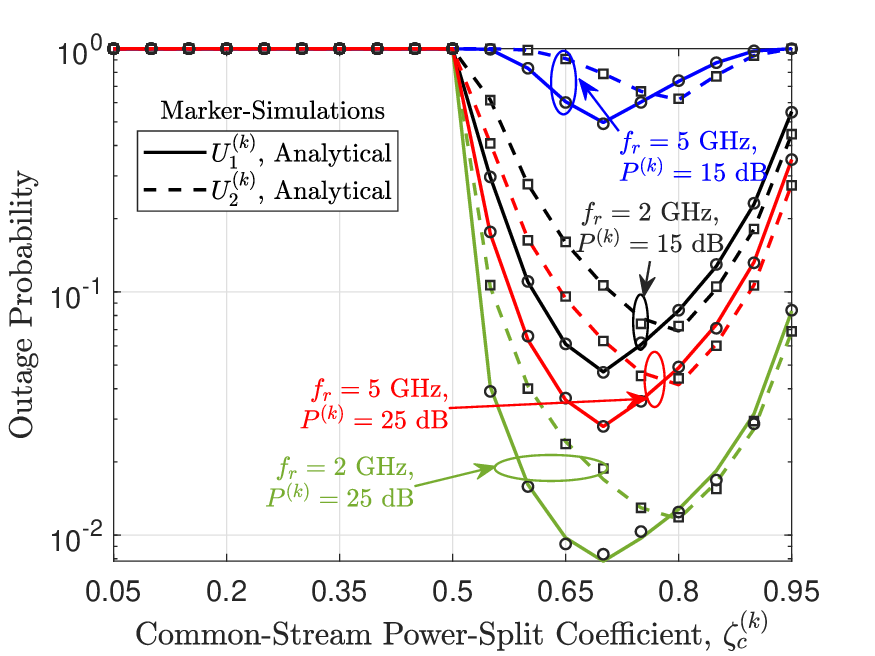}
        \caption{\textit{Case II}: Outage probability vs $\zeta_c^{(k)}$.}
        \label{fig3}
    \end{minipage}
    
\end{figure*}
This section validates the analytical outage expressions derived in Section~\ref{sec:analysis} through Monte Carlo simulations and examines the outage performance under orthogonal frequency allocation and full frequency reuse. Note that all the derived expressions are valid for $\mathcal{N}$ users in both \textit{Case I} and \textit{Case II}. However, for clarity, the results are presented only for two downlink users, $U^{(k)}_1$ and $U^{(k)}_2$, in both cases.

Unless otherwise specified, the complete set of simulation parameters is provided in Table~\ref{sim_table}. Additionally, achievable throughput follows directly from the derived outage expressions as stated in Remark~1 and Remark~2. Therefore, due to space limitations, only outage results are presented.

\begin{table}[t]	\renewcommand{\arraystretch}{1}
		\centering
		\caption{ Simulation Parameters.}
		\label{sim_table}
			\resizebox{\columnwidth}{!}{\begin{tabular}{|l|l|l|l|l|l|}
			\hline
			Parameter         & Value         & Parameter & Value  & Parameter & Value  \\ \hline
			$H  $   &     $20 $~km   	&  $\theta_1^{(k)}  $   &     $60^{\circ}$ &  $\theta_2^{(k)}  $   &     $45^{\circ}$ \\ \hline
			
			$G_{T}^{(k)}$    &    $ 30$~dBi &	  $G^{(k)}_{R,n}$ &     $3$~dBi &   $R_{c,n}^{(k)}$ &     $1$  	 \\ \hline	

           $R_{p,1}^{(k)} $   &     $0.6 $ &	 $R_{p,2}^{(k)} $&     $0.1$ &   $m_1^{(k)}$ &     $10$  	 \\ \hline

           $b_1^{(k)}$   &     $0.158 $ &	  $\Omega_1^{(k)}$ &     $1.29$ &  $m_2^{(k)}$ &     $8$ 	 \\ \hline	

$b_2^{(k)}$   &     $0.140 $ &	  $\Omega_2^{(k)}$ &     $1.10$ &  $N_0$ &     -$90$~dB 	 \\ \hline
		\end{tabular}}
	\end{table}  


Fig.~\ref{fig1} and Fig.~\ref{fig2} plot outage probability versus the $k_{th}$-beam transmit power $P^{(k)}$ for orthogonal allocation and full frequency reuse, respectively, across carrier frequencies $f_r$ and link-impairment values $L_{R,n}^{(k)}$, $L_A^{(k)}$. For the full frequency reuse results, the $k_{th}$ beam is assumed to experience interference from two co-channel beams ($K-1=2$) located at $\theta^{(\ell)}={50^\circ,30^\circ}$, with shadowed-Rician parameters ${(4,0.4,0.6),(3,0.5,0.4)}$ and sidelobe transmit gains of $5$ and $10$~dBi. In both, the analytical curves match simulation closely, validating Propositions~1 and~2. It can be observed that increasing $P^{(k)}$ improves outage in all cases, while higher $f_r$, $L_{R,n}^{(k)}$, and $L_A^{(k)}$ each degrade it by reducing the net link gain $F_n^{(k)}$ in \eqref{eq:link_budget}. Consequently, a higher transmit power is required to achieve the same outage performance, resulting in a rightward shift of the outage curves.

A key distinction appears at high transmit power: under orthogonal allocation, the outage probability continues to decrease with increasing $P^{(k)}$, over the considered power range, as no additional co-channel interference from other beams is present. In contrast, under full frequency reuse it saturates to a non-zero floor, since the desired signal and aggregate interference $I_n^{(k)}$ scale together with the transmit power (i.e., $P^{(k)}=P^{(\ell)}\;\forall\ell$), driving the SINR to a power-independent desired-to-interference ratio. This floor is nearly identical for $2$~GHz and $5$~GHz for both $5$~dBi and $10$~dBi sidelobe transmit antenna gain from beams other than the $k_{th}$ beam.  Since the desired and interfering links share the same carrier frequency and path-loss formula, the frequency-dependent term cancels in their ratio, leaving the floor set by the mainlobe-to-sidelobe gain difference rather than $f_r$.
Carrier frequency instead determines the power needed to \emph{reach} the floor, with $5$~GHz remaining noise-limited over a wider power range before converging to the same floor as $2$~GHz. This confirms that transmit power alone cannot overcome full-frequency-reuse interference, motivating sidelobe suppression or power-split design as an important way forward for better system performance.

Fig.~\ref{fig3} plots outage versus the common-stream power-split coefficient $\zeta_c^{(k)}$ under full frequency reuse, for $2$~GHz and $5$~GHz with transmit power set at $15$ and $25$~dB, with $L_{R,n}^{(k)}=0.1$, $L_A^{(k)}=0.05$, and $10$~dBi sidelobe gain. Outage is identically one for $\zeta_c^{(k)}\le0.5$, a threshold obtained directly from the feasibility condition $\tilde{A}_2>0$: with $\gamma_{c,n}^{th,(k)}=2^{R_{c,n}^{(k)}}-1=1$ (as ${R_{c,n}^{(k)}}$ is set to $1$), feasibility requires $\zeta_c^{(k)}>\gamma_{c,n}^{th,(k)}/(1+\gamma_{c,n}^{th,(k)})=0.5$, below which the common-stream rate is unsupportable.

Beyond this cutoff, outage decreases as the common-stream margin widens, then rises again as the shrinking private-stream budget $(1-\zeta_c^{(k)})$ increasingly limits $\mathcal{P}_{p,n}^{(k)}$, so that the overall outage $\max(\mathcal{P}_{c,n}^{(k)},\mathcal{P}_{p,n}^{(k)})$ follows a non-monotonic trend governed by the balance between these two competing terms. This balance is user-dependent: $U_2$'s channel condition makes $\mathcal{P}_{c,2}^{(k)}$ fall more slowly past the cutoff, while its smaller private power share makes $\mathcal{P}_{p,2}^{(k)}$ rise more sharply as $\zeta_c^{(k)}$ increases, so $U_2$ requires a noticeably larger $\zeta_c^{(k)}$ than $U_1$ to reach comparably low outage. This asymmetry shows that a single fixed $\zeta_c^{(k)}$ cannot equally favor both users, motivating fairness-aware or adaptive power-split design as a natural extension of this analysis.

\section{Conclusion}
This paper investigated the outage performance of HAPS-assisted downlink RSMA under realistic link-budget conditions and shadowed-Rician fading. Closed-form outage expressions were derived for orthogonal frequency allocation, while a tractable Gamma approximation was developed for full frequency reuse to characterize the impact of aggregate inter-beam interference. The analytical results were validated through Monte Carlo simulations, confirming the accuracy of the proposed expressions. Numerical results showed that aggregate inter-beam interference in full frequency reuse leads to an outage floor, whereas orthogonal frequency allocation achieves continuously improving outage performance with increasing transmit power. These findings provide useful insights for the design of power allocation and frequency reuse strategies in HAPS-assisted RSMA networks.
\appendices
\section{}
\begin{proof}
From \eqref{eq:sinr_common} and \eqref{eq:sinr_private}, both $\gamma_{c,n}^{(k)}$ and $\gamma_{p,n}^{(k)}$ are monotonically increasing functions of the same random variable $Z_n^{(k)}=|g_n^{(k)}|^2$. Rearranging $\gamma_{c,n}^{(k)}\geq\gamma_{c,n}^{th,(k)}$ and $\gamma_{p,n}^{(k)}\geq\gamma_{p,n}^{th,(k)}$ yields the equivalent conditions $Z_n^{(k)}\geq z_{c,n}^{(k)}\triangleq\gamma_{c,n}^{th,(k)}/A_2$ and $Z_n^{(k)}\geq z_{p,n}^{(k)}\triangleq\gamma_{p,n}^{th,(k)}/A_3$, respectively (with outage occurring with probability one if $A_2\leq0$ or $A_3\leq0$). Since $U_n^{(k)}$ is outage-free only if \emph{both} conditions hold, and both are lower-bound constraints on the same $Z_n^{(k)}$, their intersection reduces to a single condition at the larger threshold:
$\left\{Z_n^{(k)}\geq z_{c,n}^{(k)}\right\}\cap\left\{Z_n^{(k)}\geq z_{p,n}^{(k)}\right\} \!=\! \left\{\!Z_n^{(k)}\!\geq\max\left(\!z_{c,n}^{(k)},z_{p,n}^{(k)}\right)\!\right\}$.
The outage probability thus, follows from the CDF of $Z_n^{(k)}$:
\begin{align}
&P_{\mathrm{out},n}^{(k)} = F_{Z_n^{(k)}}\!\left(\max\left(z_{c,n}^{(k)},z_{p,n}^{(k)}\right)\right) \nonumber\\= &\max\!\left(\!F_{Z_n^{(k)}}\!\left(\!z_{c,n}^{(k)}\!\right),\!F_{Z_n^{(k)}}\!\left(\!z_{p,n}^{(k)}\!\right)\!\right) \!=\! \max\!\left(\mathcal{P}_{c,n}^{(k)},\mathcal{P}_{p,n}^{(k)}\right),
\end{align}
where the second equality holds because $F_{Z_n^{(k)}}(\cdot)$ is monotonically non-decreasing.

\end{proof}
\section{}

\begin{proof}
The shadowed-Rician model represents $g_n^{(\ell)}=S+N$, where $S$ is a line-of-sight component with random power $\xi\sim\mathrm{Gamma}\!\left(m_n^{(\ell)},\Omega_n^{(\ell)}/m_n^{(\ell)}\right)$ and uniform random phase, and $N\sim\mathcal{CN}(0,2b_n^{(\ell)})$ is an independent scattered component. Conditioned on $\xi$, $Z_n^{(\ell)}=|S+N|^2$ follows a noncentral chi-squared distribution with two degrees of freedom, noncentrality parameter $\xi$, and per-dimension variance $b_n^{(\ell)}$, for which
\begin{equation}
\mathbb{E}\!\left[Z_n^{(\ell)}\mid\xi\right]\!=\!\xi+2b_n^{(\ell)},
\mathrm{Var}\!\left[Z_n^{(\ell)}\!\mid\!\xi\right]\!=\!4\left(b_n^{(\ell)}\right)^2\!\!+4b_n^{(\ell)}\xi.
\label{eq:conditional_moments}
\end{equation}
Taking the expectation of the first relation in \eqref{eq:conditional_moments} over $\xi$, and using $\mathbb{E}[\xi]=\Omega_n^{(\ell)}$, gives $\mathbb{E}[Z_n^{(\ell)}]=\Omega_n^{(\ell)}+2b_n^{(\ell)}$. By the law of total variance,
\begin{align}
\mathrm{Var}\!\left[Z_n^{(\ell)}\right] = \mathbb{E}\!\left[\mathrm{Var}\!\left[Z_n^{(\ell)}\mid\xi\right]\right] + \mathrm{Var}\!\left[\mathbb{E}\!\left[Z_n^{(\ell)}\mid\xi\right]\right] \nonumber\\= \left(4\left(b_n^{(\ell)}\right)^2+4b_n^{(\ell)}\Omega_n^{(\ell)}\right) + \mathrm{Var}[\xi].
\end{align}
Since $\xi\sim\mathrm{Gamma}\!\left(m_n^{(\ell)},\Omega_n^{(\ell)}/m_n^{(\ell)}\right)$, its variance is $\mathrm{Var}[\xi]=\left(\Omega_n^{(\ell)}\right)^2/m_n^{(\ell)}$, which yields \eqref{eq:moments_single}.
\end{proof}

\section{}
\begin{proof}
Rearranging $\Upsilon_{c,n}^{(k)}\geq\gamma_{c,n}^{th,(k)}$ in \eqref{eq:sinr_common_reuse} isolates $Z_n^{(k)}$, yielding the interference-dependent threshold condition $Z_n^{(k)}\geq t_c\!\left(I_n^{(k)}+N_0\right)$, provided $\tilde{A}_2>0$; the analogous step applied to $\Upsilon_{p,n}^{(k)}\geq\gamma_{p,n}^{th,(k)}$ gives $Z_n^{(k)}\geq t_p\left(I_n^{(k)}+N_0\right)$. Since $I_n^{(k)}$ is random, each outage probability is obtained by averaging the shadowed-Rician CDF of $Z_n^{(k)}$ over $I_n^{(k)}$ such that
$\hat{\mathcal{P}}_{c,n}^{(k)} = \mathbb{E}_{I_n^{(k)}}\!\left[F_{Z_n^{(k)}}\!\left(t_c\!\left(I_n^{(k)}+N_0\right)\right)\right]$
and similarly for $\mathcal{P}_{p,n}^{(k)}$ with $t_p$ in place of $t_c$. Substituting \eqref{eq:sr_cdf} and expanding $\left(I_n^{(k)}+N_0\right)^l$ via the binomial theorem\cite[ $1.111$]{2015249}, reduces the averaging to computing $\mathbb{E}\!\left[\left(I_n^{(k)}\right)^{\tilde{p}}\exp\left({-sI_n^{(k)}}\right)\right]$ for ${\tilde{p}}=0,\ldots,l$. Under the Gamma approximation of \eqref{eq:gamma_params}, $I_n^{(k)}$ admits the Laplace transform $\mathcal{L}_{I_n^{(k)}}(s)=\left(1+\Theta_I^{(k)}s\right)^{-\kappa_I^{(k)}}$, and the required moment follows from
\begin{align}
\mathbb{E}\!\left[\left(I_n^{(k)}\right)^{\tilde{p}}\exp\left({-sI_n^{(k)}}\right)\right] = (-1)^{\tilde{p}}\frac{d^{\tilde{p}}}{ds^{\tilde{p}}}\mathcal{L}_{I_n^{(k)}}(s)\nonumber\\ = \left(\Theta_I^{(k)}\right)^{\tilde{p}}\left(\kappa_I^{(k)}\right)_{\tilde{p}}
\left(1+\Theta_I^{(k)}s\right)^{-\left(\kappa_I^{(k)}+{\tilde{p}}\right)}.
\end{align}
Substituting this back through the binomial expansion and the outer CDF summation yields  $\hat{\mathcal{P}}_{c,n}^{(k)}$ (with $s=A_1 t_c$) and  $\hat{\mathcal{P}}_{p,n}^{(k)}$ (with $s=A_1 t_p$), completing the proof.
\end{proof}

\bibliographystyle{IEEEtran_renamed}
	\bibliography{refer}
\end{document}